\documentclass[12pt,a4paper]{article}
\usepackage[utf8]{inputenc}
\usepackage[english]{babel}
\usepackage{amsmath}
\usepackage{dsfont}
\usepackage{amsfonts}
\usepackage{amssymb}
\usepackage{multicol}
\usepackage{graphicx}
\usepackage{titlesec}
\usepackage{url} 
\usepackage{authblk} 
\usepackage[bookmarksnumbered, colorlinks, plainpages]{hyperref}
\usepackage[left=2cm,right=2cm,top=2cm,bottom=2cm]{geometry}
\numberwithin{equation}{section}

\newtheorem{definition}{Definition}[section]
\newtheorem{theorem}{Theorem}[section]

\newtheorem{assump}{Assumption}[section]
\newtheorem{proof}{Proof}[section]

\usepackage[left=2cm,right=2cm,top=2cm,bottom=2cm]{geometry}

\everymath{\displaystyle}

%

\titleformat{\section}
{\centering\large\bfseries}{\thesection.}{0.5em}{}

\titleformat{\subsection}
{\centering\normalsize\bfseries}{\thesubsection.}{0.5em}{}

\newcommand{\Ni}{\noindent}

\pagestyle{plain}

\title{\textbf{Wavelet-based estimation of long-memory parameter in stochastic volatility models using a robust log-periodogram}}

\author[1]{Manganaw N'Daam}
\author[1]{Tchilabalo Abozou Kpanzou}
\author[1]{Edoh Katchekpele}

\affil[1]{Laboratoire de Modélisation Mathématique et d’Analyse Statistique Décisionnelle (LaMMASD)\\
	Département de Mathématiques, Université de Kara, BP 404 Kara, TOGO.}

\affil[ ]{\textit{Emails}: \texttt{manganawn@gmail.com}, \texttt{kpanzout@gmail.com}, \texttt{edohkatchekpele@gmail.com}}

\date{}
\begin{document}
	
	\maketitle
	\noindent
	\author{}

	\Ni \textbf{Abstract:} 
	In this paper, we propose a novel method for estimating the long-memory parameter in time series. By combining the multi-resolution framework of wavelets with the robustness of the Least Absolute Deviations (LAD) criterion, we introduce a periodogram providing a robust alternative to classical methods in the presence of non-Gaussian noise. Incorporating this periodogram into a log-periodogram regression, we develop a new estimator. Simulation studies demonstrate that our estimator outperforms the  Geweke and Porter-Hudak (GPH) and Wavelet-Based Log-Periodogram (WBLP) estimators, particularly in terms of mean squared error, across various sample sizes and parameter configurations.\\
	
	\Ni \textbf{Key Words and Phrases}: Long-memory parameter estimation, Wavelet-based methods, Log-periodogram regression, Simulation.\\
	\noindent {\bfseries 2010 Mathematics Subject Classifications: }Primary 62M10; Secondary 62J05.
	
	\section{Introduction}
	The estimation of the long-memory parameter is a fundamental problem in time
	series analysis, particularly for long-memory processes where the autocorrelation function decays hyperbolically. These processes are prevalent in various
	fields, including econometrics, finance, climatology, and social sciences, where
	accurate characterization of long-range dependence is crucial for effective modeling and forecasting.\\
	
	\Ni Since the seminal work of \cite{hosking}, numerous methods have been proposed for estimating the long-memory parameter, denoted \( d \). One of the most widely used approaches is the semi-parametric log-periodogram regression method introduced by \cite{gph}. Despite its popularity, this method suffers from significant limitations, particularly its sensitivity to non-Gaussian noise, which can induce substantial negative bias in the estimation, as noted by \cite{lee1}. Addressing these shortcomings has been a key focus of recent research efforts.\\
	
	\Ni Wavelet-based methods have emerged as a promising alternative, offering superior robustness by leveraging both time and frequency localization properties.
	\cite{lee1} introduced an estimator based on the wavelet periodogram, demonstrating improved performance under non-Gaussian conditions. More recently, \cite{xu2022} proposed a wavelet-based approach for estimating \( d \) in potentially non-linear and non-Gaussian time series, further highlighting the advantages of wavelet techniques. Similarly, \cite{pinto2023} developed an estimator based on fractional spline wavelets, reinforcing the effectiveness of wavelet-based estimation in capturing long-memory behavior.\\
	
	\Ni In this paper, we propose a novel estimation method for the long-memory parameter in long-memory stochastic volatility (LMSV) models. Our approach builds upon the wavelet log-periodogram framework, introducing a periodogram, which relies on the least absolute deviations (LAD) method applied to wavelet coefficients. This modification enhances robustness against noise and improves estimation accuracy, particularly in financial time series characterized by heavy-tailed distributions and volatility clustering.\\
	
	\Ni The remainder of this paper is structured as follows. Section \ref{sec_02} provides an overview of wavelet analysis and its applications in time series modeling. Section \ref{sec_03} reviews key estimators of the long-memory parameter, with a focus on the contributions of  \cite{gph} and  \cite{lee1}. Section \ref{sec_04} presents the new periodogram and details our proposed estimator. Section \ref{sec_05} evaluates the performance of our method through a comparative simulation study, demonstrating its advantages over existing techniques. Finally, Section \ref{sec_06} gives some concluding remarks.
	
	\section{Wavelet Analysis} \label{sec_02}
	In this section, we introduce the discrete wavelet transform. This method relies on the construction of orthonormal bases obtained by dyadic dilation and translation of a pair of specific functions, $\phi$ and $\psi$, called the father wavelet and mother wavelet, respectively, such that:
	\[
	\int_{-\infty}^\infty \phi(x) \, dx = 1, \quad \int_{-\infty}^\infty \psi(x) \, dx = 0.
	\]
	
	\noindent In time series analysis, the father wavelet \(\phi(\cdot)\) is used to represent smooth and low-frequency components of the series, while the mother wavelet \(\psi(\cdot)\) captures details and high-frequency components.\\
	
	\noindent The wavelet family \( \{\psi_{j,q}, \, (j, q) \in \mathbb{Z} \times \mathbb{Z}\} \) and scaling functions \( \{\phi_{j,q}, \, (j, q) \in \mathbb{Z} \times \mathbb{Z}\} \) are constructed by performing dilation and translation operations on the functions \( \phi \) and \( \psi \), defined as:
	\[
	\psi_{j,q}(t) = 2^{\frac{j}{2}} \psi(2^j t - q), \quad \phi_{j,q}(t) = 2^{\frac{j}{2}} \phi(2^j t - q),
	\]
	where \(j = 1, \dots, J\) represents the scale, and \(q = 1, \dots, 2^j\) corresponds to the translation. 
	The parameter \(j\) dilates the wavelet functions to adjust their support and capture characteristics of low or high frequencies, while \(q\) performs temporal translation. The maximum number of scales \(J\) depends on the number of observations \(n\), with the constraint \(n \geq 2^J\). These families are used to compute wavelet and scaling coefficients, respectively.\\
	
	\noindent An important property of wavelets is localization: the coefficient associated with \(\psi_{j,q}(t)\) provides local information about the function around the approximate position \(q 2^{-j}\) and frequency \(2^j\).\\
	
	\noindent Any function belonging to \(L^2(\mathbb{R})\) can be developed over a wavelet basis as follows:
	\[
	X(t) = \sum_q a_{J_0, q} \phi_{J_0, q}(t) + \sum_{j > J_0} \sum_q b_{j,q} \psi_{j,q}(t),
	\]
	where \(\phi_{J_0,q}\) is a scaling function associated with coarse coefficients \(a_{J_0,q}\), and \(\psi_{j,q}\) is associated with fine coefficients \(b_{j,q}\), all given by the relations:
	\[
	a_{J_0, q} = \int X(t) \phi_{J_0, q}(t) \, dt, \quad b_{j,q} = \int X(t) \psi_{j,q}(t) \, dt.
	\]
	\noindent See \cite{beals2004} for \(L^2(\mathbb{R})\) functions.\\
	
	\noindent These coefficients measure the respective contributions of the scaling functions and wavelets to reconstruct \(X(t)\). This decomposition allows \(X(t)\) to be represented as orthogonal components at different resolutions, constituting a multi-resolution analysis (MRA). For more details, see \cite{mallat}.\\
	
	\noindent We now explicitly present the properties of wavelet functions (see \cite{lee1}).\\
	\begin{assump}\label{assump2.1}
		\begin{itemize}
			\item[(a)] $\psi : \mathbb{R} \to \mathbb{R}$ such that:
			\[
			\int_{-\infty}^\infty \psi(x) \, dx = 0, \quad 
			\int_{-\infty}^\infty |\psi(x)| \, dx < \infty, \quad 
			\int_{-\infty}^\infty (1 + x^2) |\psi(x)| \, dx < \infty.
			\]
			
			\item[(b)] $|\widehat{\psi}(\lambda)| = \vert \lambda \vert^\nu b(\lambda)$, with $\dfrac{b(t\lambda)}{b(\lambda)} = 1$ for all $t$, as $\lambda \to 0$, where $\nu$ is a positive integer, $0 < b(0) < \infty$, and $\hat{\psi}(\lambda)$ is the Fourier transform of $\psi$, defined as:
			\[
			\hat{\psi}(\lambda) = (2\pi)^{-1/2} \int_{-\infty}^\infty \psi(x) e^{-i \lambda x} \, dx.
			\]
		\end{itemize}
	\end{assump}
	
	\noindent Assumption \ref{assump2.1}(a) describes the properties of the wavelet function $\psi\left(\cdot\right)$. Assumption  \ref{assump2.1}(b), on the other hand, models the spectral behavior of the Fourier transform around $\lambda = 0$. The integer parameter $\nu$ corresponds to the number of vanishing moments of $\psi\left(\cdot\right)$, meaning:
	\[
	\int_{-\infty}^\infty x^r \psi(x) \, dx = 0, \quad \text{for } r = 0, 1, \dots, \nu-1.
	\]
	\noindent The vanishing of the first $\nu$ moments is equivalent to the vanishing of the first $\nu$ spectral derivatives at zero, that is:
	\[
	\frac{d^r}{d\lambda^r} \hat{\psi}(\lambda) \Big|_{\lambda=0} = 0, \quad \text{for } r = 0, 1, \dots, \nu-1.
	\]\\
	
	\noindent This hypothesis is satisfied if $\psi(\cdot)$ has a compact support and belongs to the class \(C^\nu(\mathbb{R})\), meaning that all derivatives up to order $\nu$ exist and the $\nu$-th derivative \(f^{(\nu)}\) is continuous on \(\mathbb{R}\).\\
	
	\noindent In practical applications, time series are generally discrete rather than continuous. Consequently, instead of using continuous wavelets, discrete sequences called wavelet filters are employed (see \cite{daubechi}). The length of these sequences corresponds to the width of the wavelet filter. This filtering-based approach makes wavelet analysis particularly suitable for time series.\\
	
	\noindent In the discrete wavelet transform (DWT), wavelet coefficients are obtained through the multi-resolution analysis (MRA) scheme, which is implemented using a two-channel filter bank representing the wavelet transform, see \cite{mallat}.\\
	
	\noindent \cite{daubechi} introduced an important class of wavelet filters called compactly supported Daubechies wavelet filters, which have the smallest support for a given number of vanishing moments. Two types of filters are distinguished: the extremal phase filters \(D(L)\) and the least asymmetric filters \(LA(L)\).\\
	
	\noindent An improved version of the DWT is the maximal overlap discrete wavelet transform (MODWT). As explained by \cite{percival}, the MODWT algorithm follows the same filtering steps as the standard DWT but without down-sampling, i.e., without halving the resolution. Thus, the number of scaling and wavelet coefficients at each level of the transform remains equal to the total number of observations in the sample. For more details, see \cite{percival}.\\
	
	\noindent After examining the fundamental principles of wavelet analysis, we now turn our attention to estimators of the long-memory parameter.

	\section{Long-Memory Parameter Estimators} \label{sec_03}
	In this section, we present two estimators for evaluating the long-memory parameter. Each method relies on specific assumptions and tools, offering advantages and limitations depending on the application context. The goal is to compare these approaches and analyze their relevance for different types of processes, considering their robustness and accuracy.
	
	\subsection{GPH Estimator}\label{subsec_1}
	In this subsection, we present the estimation method developed by \cite{gph}, which is based on a least squares regression in the spectral domain, leveraging the spectral density behavior \(f(\cdot)\) of a long-memory process at the origin:\\
	
	\begin{assump}\label{assump3.1}
		\begin{equation}
			f(\lambda) = \vert \lambda \vert^{-2d}f_{u}(\lambda), \quad \text{as } \lambda \rightarrow 0, \label{densite}
		\end{equation}
		where \(d\) is the long-memory parameter and \(f_{u}(\lambda)\) is a function assumed to be finite, bounded away from zero and continuous on the interval \(\left[-\pi,\pi\right]\).
	\end{assump}
	
	\noindent Consider the problem of estimating the parameter \(d\) in the general integrated long-memory process $\left\{X_{t}\right\}$ (see \cite{hosking}). Assume that:
	\begin{equation*}
		(1 - B)^d X_t = u_t, 
	\end{equation*}
	where $B$ is the backward shift operator, and $\left\{u_t\right\}$ is a stationary linear process with a spectral density function \(f_u(\lambda)\) that is finite, bounded away from zero, and continuous on the interval \(\left[-\pi, \pi\right]\).\\
	
	\noindent The spectral density $f(\cdot)$ of the integrated long-memory process $\left\{X_{t}\right\}$ is given by:
	\begin{equation}
		f(\lambda) = \left[4 \sin^{2}\left(\frac{\lambda}{2}\right)\right]^{-d}f_{u}(\lambda) .\label{densite_arfima}
	\end{equation}
	
	\noindent Exploiting these spectral properties, \cite{gph} propose a log-periodogram regression method for semi-parametrically estimating $d$ based on the first harmonic frequencies. While widely adopted for its efficiency and simplicity, this approach can be sensitive to certain violations of assumptions, such as the presence of outliers, heavy-tailed distributed errors (see \cite{lim2018} and \cite{lim2022}), or non-Gaussian noise.\\
	
	\noindent Applying the logarithm to (\ref{densite_arfima}), we obtain:
	\begin{equation}
		\log f\left(\lambda\right) = \log  f_{u}\left(0\right) -d\log  \left[4\sin^{2}\left(\frac{\lambda}{2}\right)\right] + \log\left[\frac{f_{u}\left(\lambda\right)}{f_{u}\left(0\right)}\right]. \label{3}
	\end{equation}
	
	\noindent Let $I_{n}\left(\lambda_{k}\right)$ denote the ordinary periodogram of the integrated long-memory process $\left\{X_{t}\right\}$ evaluated at the Fourier frequencies $\lambda_{k} = \frac{2\pi k }{n}$, with $ k = 1,2,\cdots,m$, where $n$ is the sample size and $m$ is the number of Fourier frequencies considered. Adding $\log I_{n}\left(\lambda_{k}\right)$ to both sides of (\ref{3}), we obtain:
	\begin{equation}
		\log  I_{n}\left(\lambda_{k}\right) = \log  f_{u}\left(0\right) -d\log  \left[4\sin^{2}\left(\frac{\lambda_{k}}{2}\right)\right]  + \log \left[\frac{I_{n}\left(\lambda_{k}\right)}{f\left(\lambda_{k}\right)}\right] + \log \left[\frac{f_{u}\left(\lambda_{k}\right)}{f_{u}\left(0\right)}\right], \label{4}
	\end{equation}
	\noindent where the ordinary periodogram at frequency $\lambda_{k}$ is defined as: 
	$$ I_{n}\left(\lambda_{k}\right) = \frac{1}{2\pi n} \left| \sum_{t=1}^{n}X_{t}e^{it\lambda_{k}}\right|^{2} \quad \text{with} \quad k = 1,2,\cdots,m.  $$\\
	
	\noindent The log-periodogram estimator relies on the following hypothesis:\\
	
	\begin{assump}\label{assump3.2}
		\begin{itemize}
			\item [(a)] For low frequencies, the term $\log \left[\frac{f_{u}\left(\lambda_{k}\right)}{f_{u}\left(0\right)}\right]$ is negligible.
			\item [(b)] The random variables $\log \left[\frac{I_{n}\left(\lambda_{k}\right)}{f\left(\lambda_{k}\right)}\right]$, $k = 1,2,\cdots,m$, are asymptotically $i.i.d.$
		\end{itemize}
	\end{assump}
	
	\noindent Under Assumption \ref{assump3.2}, \cite{gph} rewrite equation (\ref{4}) as:
	\begin{equation*}
		Z_{k} = c_{0}   + dR_{k} + \epsilon_{k}, 
	\end{equation*}
	where $Z_{k} = \log  I_{n}\left(\lambda_{k}\right)$, $c_{0} =  \log  f_{u}\left(0\right) + C$, $R_{k} = -\log  \left[4\sin^{2}\left(\frac{\lambda_{k}}{2}\right)\right]$, $\epsilon_{k} = \log \left[\frac{I_{n}\left(\lambda_{k}\right)}{f\left(\lambda_{k}\right)}\right] - C$, and $\epsilon_{k}\sim i.i.d(-C,\frac{\pi^{2}}{6})$ with $C=-0.5772$ being the Euler constant. The ordinary least squares estimator is then given by:
	\begin{equation*}
		\hat{d}_{GPH} = \frac{\sum_{k=1}^{m}\left(R_{k}-\overline{R}_{m}\right)Z_{k}}{\sum_{k=1}^{m}\left(R_{k}-\overline{R}_{m}\right)^2},
	\end{equation*}
	where $\overline{R}_{m} = \frac{1}{m}\sum_{k=1}^{m}R_{k}.$\\
	
	\begin{assump}\label{assump3.3}
		$m = m(n) \rightarrow \infty$, $n\rightarrow \infty$, and $ \frac{m(n)}{n}\xrightarrow[n\rightarrow\infty]{}  0 $.
	\end{assump}
	
	\noindent Under Assumption \ref{assump3.3}, \cite{gph} prove the asymptotic normality of the estimator $\hat{d}_{GPH}$ when $\vert d \vert < \frac{1}{2}$:
	\begin{equation*}
		m^{\frac{1}{2}}(\hat{d}_{GPH}-d) \xrightarrow{D} \mathcal{N}\left(0,\frac{\pi^{2}}{6}\left[\sum_{k=1}^{m}\left(R_{k}-\overline{R}_{m}\right)^2\right]^{-1} \right) \text{ as } n \rightarrow \infty,
	\end{equation*}
	\noindent where $\xrightarrow{D}$ denotes the convergence in distribution (see \cite{lo2021} for details).\\
	
	\noindent Although the classical log-periodogram estimator is widely used to estimate the long-memory parameter, it can be significantly affected by the presence of noise or complex data structures. To overcome these limitations, wavelet-based approaches provide a powerful alternative by leveraging the time-frequency localization properties of wavelets.
	\subsection{Wavelet-Based Log-Periodogram Estimator}
	
	\noindent \cite{lee1} proposed a significant improvement to the semi-parametric log-periodogram regression method for estimating the memory parameter in long-memory stochastic volatility (LMSV) models. Although the log-periodogram estimator of \cite{gph} is widely used, it has important limitations. In particular, it violates the Gaussianity or martingale assumptions, resulting in a pronounced negative bias due to the presence of a non-Gaussian noise spectrum. \cite{lee1} demonstrated that applying the wavelet transform to the squared process effectively reduces the noise spectrum around zero frequency, yielding a quasi-Gaussian spectral representation at this frequency. Based on this, he proposed a regression estimator in the wavelet domain and established its asymptotic mean squared error and consistency in accordance with the asymptotic theory of long-memory processes developed by \cite{hdb}.\\
	
	\noindent Consider a long-memory stochastic volatility model for discrete time series \(\{X_{1},\cdots,X_{n}\}\), defined by:
	
	\begin{equation}
		X_t = \sigma \exp\left(\frac{Z_t}{2}\right) e_t, \quad t = 1,\cdots,n, \label{LMSV}
	\end{equation}
	
	\noindent where \(\{Z_t\}\) is a latent Gaussian long-memory process with a memory parameter \(d \in (0, 0.5)\), independent of the process \(\{e_t\}\), which is an i.i.d. noise process with zero mean. The spectral density of the process \(\{Z_t\}\) at zero frequency follows the form \eqref{densite}.\\
	
	\noindent The log-squared process is used as a measure of volatility and is expressed as:
	
	\[
	Y_t = \log(X_t^2) = \eta + Z_t + U_t,
	\]
	
	\noindent where \(\eta = \log \sigma^2 + \mathbb{E}(\log e_t^2)\) and \(U_t = \log e_t^2 - \mathbb{E}(\log e_t^2)\). Here, \(\{U_t\}\) is an i.i.d. process with zero mean and variance \(\sigma_U^2\). The autocovariances \(R(h)\) of \(\{Y_t\}\) are identical to those of \(\{Z_t\}\) for \(h \neq 0\).\\
	
	\noindent The spectral density of the process \(\{Y_t\}\) is thus the sum of the spectral density of the Gaussian long-memory process \(\{Z_t\}\) and that of the non-Gaussian noise \(\{U_t\}\), giving the following relationship:
	\begin{align*}
		f_Y(\lambda) &= f_{Z}(\lambda) + f_U(\lambda) \nonumber \\ 
		&= \vert \lambda \vert^{-2d} f_{u}(\lambda) + \frac{\sigma_U^2}{2\pi} \nonumber \\ 
		f_Y (\lambda)&=  \vert \lambda \vert^{-2d} \left( f_{u}(\lambda) + \frac{\sigma_U^2}{2\pi}  \vert \lambda \vert^{2d} \right), \quad \text{as} \quad \lambda \to 0, 
	\end{align*}
	
	\noindent where $\sigma_U^{2}$ is the variance of the non-Gaussian noise.\\
	
	\noindent Thus, this approach accounts for the long-memory effects while integrating the impact of non-Gaussian noise, as emphasized by \cite{lee1}.\\
	
	\noindent Due to the presence of the non-Gaussian noise spectrum, \cite{lee1} shows that the log-periodogram estimator suffers from a significant negative bias that behaves at the order of \(\vert\lambda\vert^{2d}\). To address this, he proposes using the wavelet transform of the squared process \(\{Y_t\}\) to obtain an approximately Gaussian spectral representation by effectively eliminating the noise spectrum around zero frequency.\\
	
	\noindent The discrete wavelet transform of the process $\left\{Y_{1},...,Y_{n}\right\}$ is defined as follows:\\
	\begin{equation}
		w_{jq} = 2^{\frac{j}{2}}\sum_{t=1}^{n}Y_{t}\psi\left(2^{j}t-q\right), \label{coeff_ondelette}
	\end{equation}
	
	\noindent where \( t \) is appropriately re-indexed to ensure full coverage of the wavelet support. For example, if the support of \( \psi \) is \([0, 1]\), then \( t = i/n \), for \( i = 1, 2, \dots, n \). The integers \( j \) and \( q \) are the scale and translation parameters, respectively, where \( j = 0, 1, \dots, J \) and \( q = 0, 1, \dots, 2^j - 1 \). The finest scale is set at \( J \), where \( n = 2^J \).\\
	
	\noindent Let Haar wavelet satisfies Assumption 1 with $\nu = 1 $ and be defined as:
	\begin{equation*}
		\psi(x) =
		\begin{cases} 
			1 & \text{if } 0 \leq x \leq 0.5, \\ 
			-1 & \text{if } 0.5 < x \leq 1, 
		\end{cases}
	\end{equation*} and its Fourier transform is:
	\begin{equation*}
		|\widehat{\psi}(\lambda)| = \dfrac{ \vert \lambda \vert}{4} \left[\frac{\sin^{2}(\lambda / 4)}{(\lambda / 4)^{2}}\right].
	\end{equation*}
	\noindent By applying the discrete wavelet transform to the process $\left\{Y_{1},...,Y_{n}\right\}$ with the Haar wavelet, it can be expressed as:
	\begin{equation*}
		w_{jq} = \alpha_{jq} + \beta_{jq}, 
	\end{equation*} 
	$\text{where} \quad \alpha_{jq} = 2^{\frac{j}{2}} \sum_{t=1/n} Z_t \psi(2^j t - q) \quad \text{and} \quad \beta_{jq} = 2^{\frac{j}{2}} \sum_{t=1/n} U_t \psi(2^j t - q).$\\
	
	\noindent \cite{lee1} shows that the spectral density of \( \beta_{jq} \) becomes zero at frequency zero. Let \( R_{\beta}(m) = \mathbb{E}[\beta_{jq} \beta_{jq+m}] \), the autocovariance of the transformed series \( \beta_{jq} \) at scale \( j \), and \( f_{\beta}^{(j)} \) the spectral density at scale \( j \). The wavelet transform \( \beta_{jq} \) is a linear combination of the i.i.d. noise process \( U_t \).\\
	
	\noindent When applying the Haar wavelet, it is simply the difference between local sums of \( U_t \) over the intervals \( t \in [2^{-j} q, 2^{-j} (q + 0.5)] \) and \( t \in (2^{-j} (q + 0.5), 2^{-j} (q + 1)] \). Furthermore, the sequence \( \beta_{jq} \) forms a $1$-dependent process, effectively behaving like a moving average (MA) process of order \( p \), with \( p \) determined by \( j \). As \( j \) increases, the width of the interval decreases. Ultimately, when \( j \) reaches the largest scale \( J \), the process \( \beta_{jq} \) approximates an \( MA(1) \) process, expressed as:
	\begin{equation*}
		\beta_{jq} = 2^{J/2} \left( U_{2^{-J} q} - U_{2^{-J} (q+1)} \right).
	\end{equation*}

	\noindent Define the autocovariance of \( \beta_{jq} \), denoted as \( R_{\beta}(m) = \mathbb{E}[\beta_{jq} \beta_{jq+m}] \). When \( j = J \), we obtain:  
	\begin{equation}
		f_{\beta}^{(J)}(0) = \frac{1}{2\pi} \sum_{m=-\infty}^{\infty} R_{\beta}(m) = 0, \label{spectral_density_beta}
	\end{equation}
	where  
	\begin{equation*}
		R_{\beta}(m) =
		\begin{cases} 
			2\sigma^2_U, & \text{if } m = 0, \\  
			-\sigma^2_U, & \text{if } m = \pm1, \\  
			0, & \text{if } |m| > 1.  
		\end{cases}  
	\end{equation*}
	
	\noindent Now, consider the spectral density of \( \alpha_{jq} \) at scale \( j \), denoted as \( f_{\alpha}^{(j)}(\lambda) \). Under Assumption 2, it is obtained directly as follows. The autocovariances of the wavelet coefficients at scale \( j \) are written as:  
	\begin{align*}
		\mathbb{E}[\alpha_{jq} \alpha_{j\tau}] &= 2^j \sum_t \sum_s \mathbb{E}[Z_t Z_s] \psi(2^j t - q) \psi(2^j s - \tau) \\ 
		&= 2^j \sum_t \sum_s \left( \int_{-\pi}^{\pi} f_Z(\lambda) e^{i(t-s)\lambda} d\lambda \right) \psi(2^j t - q) \psi(2^j s - \tau).
	\end{align*}
	
	\noindent This leads to:  
	\[
	\mathbb{E}[\alpha_{jq} \alpha_{j\tau}] = 2^{-j} \int_{-\pi}^{\pi} f_Z(\lambda) |\hat{\psi}(2^{-j} \lambda)|^2 e^{i2^{-j}(q-\tau)\lambda} d\lambda.
	\]  
	
	\noindent Thus, for \( j = J \), the spectral density of \( \alpha_{jq} \) is given by:  
	\begin{equation}
		f_{\alpha}^{(J)}(\lambda) = 2^{-J} f_Z(\lambda) |\hat{\psi}(2^{-J} \lambda)|^2, \quad \lambda \in [-\pi, \pi]. \label{spectral_density_alpha}
	\end{equation}
	
	\noindent Combining Equations \eqref{spectral_density_beta} and \eqref{spectral_density_alpha}, \cite{lee1} derives the approximate Gaussian spectral representation of the discrete wavelet transform around frequency zero: 
	\begin{equation}
		f_w(\lambda) = C_J \vert \lambda \vert^{-2(d-\nu)} f_{u}(\lambda) h(\lambda) \quad \text{as} \quad \lambda \to 0, \quad \text{for} \quad d \in (0, 0.5), \label{densite_avec_ond}
	\end{equation} 
	where $C_J = 2^{-J(1+2\nu)}$ term and \( h(\lambda) = b^2(\lambda) \). \\
	
	\noindent This spectral representation (\ref{densite_avec_ond}) provides a foundation for the semi-parametric estimation of \( d \). The spectral density \( f_w(\lambda) \) behaves as \( \vert \lambda \vert ^{-2(d-v)} \) around frequency zero, thus \( f_w(\lambda) \sim \vert\lambda\vert^{-2(d-1)} \) when the Haar wavelet is used. The functions \( f_u(\lambda) \) and \( h(\lambda) \) originate from the short-term dependence in $\left\{Z_{t}\right\}$ and the discrete wavelet transform, respectively. Note that these two functions are even, continuous over \( [-\pi, \pi] \), and non-zero at \(\lambda = 0\).\\
	
	\noindent In his work, \cite{lee1} extends the wavelet-based log-periodogram estimation method, drawing inspiration from \cite{hdb} to analyze its asymptotic properties. The approach constructs a periodogram for the wavelet transform at scale \( j \), defined as:  
	\begin{equation*}
		I_k = I_k^{(j)} = \frac{1}{2\pi n} \sum_{q=0}^{2^J-1} \left| w_{jq} \exp(i\lambda_k q) \right|^2, \quad k = 1, 2, \dots, m,
	\end{equation*}
	where \( \lambda_k = \frac{2\pi k}{n} \) represents the associated frequencies. \\
	
	
	\noindent Under the spectral representation \( f_w(\lambda) \) given in \eqref{densite_avec_ond} and by letting \( s(\lambda) = f_u(\lambda)h(\lambda) \), \cite{lee1} reformulates the log-periodogram regression as follows:  
	\begin{equation}
		\log I_k = \alpha + \gamma X_k + \log\left[\frac{s(\lambda_k)}{s(0)}\right] + \varepsilon_k, \quad k = 1, 2, \dots, m, \label{wavelet_LP}
	\end{equation}
	where:  
	\[
	\alpha = \log C_J + \log s(0), \quad \gamma = d - 1, \quad X_k = -2\log(\lambda_k), \quad \varepsilon_k = \log\left[\frac{I_k}{f_w(\lambda_k)}\right].
	\] 
	
	\noindent The wavelet-based log-periodogram estimator, denoted as \( \hat{d}_{WBLP} \), is obtained by applying a logarithmic transformation to equation (\ref{wavelet_LP}). Specifically, it involves performing a regression of the transformed log-periodogram, \(\log I_k\), on the regressors \(-2 \log(\lambda_k)\) for \(k = 1, 2, \dots, m\), and then adding one to the estimate: \\  
	\begin{equation*}
		\hat{d}_{WBLP}  = \frac{\sum_{k=1}^{m}\left(X_{k}-\overline{X}_{m}\right)	\log I_k }{\sum_{k=1}^{m}\left(X_{k}-\overline{X}_{m}\right)^2} + 1,
	\end{equation*}
	where $\overline{X}_{m} = \frac{1}{m}\sum_{k=1}^{m}X_{k}.$\\
	
	\noindent In this regression, the term \( \log\left(\frac{s(\lambda_k)}{s(0)}\right) = \log\left(\frac{f_u(\lambda_k)}{f_u(0)}\right) + \log\left(\frac{h(\lambda_k)}{h(0)}\right) \) plays a critical role in the asymptotic bias. Through a Taylor expansion around \( \lambda = 0 \), this term is approximately expressed as:  
	\[
	\log\left(\frac{s(\lambda_k)}{s(0)}\right) = \frac{1}{2} \frac{s''(0)}{s(0)} \lambda_k^2 + O(\lambda_k^4).
	\]
	Under Assumptions 1, 2, and 4, \cite{lee1} demonstrates that:  
	\[
	\begin{aligned}
		(a) \quad & \mathbb{E}[\hat{d}_{WBLP} - d] = -\frac{2\pi^2}{9} \frac{s''(0)}{s(0)} \frac{m^2}{n^2}(1 + o(1)) + O\left(\frac{m^4}{n^4}\right) + O\left(\frac{\log^3 m}{m}\right), \\ 
		(b) \quad & \text{Var}(\hat{d}_{WBLP}) = \frac{\pi^2}{24m} + o\left(\frac{1}{m}\right).
	\end{aligned}
	\]
	
	\noindent These results show that the wavelet-based log-periodogram estimator $\hat{d}_{WBLP}$ is consistent for \( d \in (0, 0.5) \), and that its variance retains a form similar to that observed in the stationary Gaussian case (for more details, see \cite{hdb}). \\
	
	\noindent The mean squared error (\(\text{MSE}\)) for the estimation of the parameter $\hat{d}_{WBLP}$ is given by:  
	\[
	\text{MSE}(\hat{d}_{WBLP}) = \left(\frac{2\pi^2}{9} \frac{s''(0)}{s(0)}\right)^2 \frac{m^4}{n^4}(1 + o(1)) + O\left(\frac{m^3 \log^3 m}{n^4}\right) + \frac{\pi^2}{24m}(1 + o(1)).
	\]  
	
	\noindent By exploiting this expression, the optimal rate for \(m\) is determined as:  
	\[
	m^* = \left[0.4634 \cdot \left(\frac{s(0)}{s''(0)}\right)^{2/5} n^{4/5} \right],
	\]
	where \( [x] \) denotes the integer part of \( x \). \\
	
	\noindent Although the terms \( s(0) \) and \( s''(0) \) are unknown, this relation implies that:  
	\[
	\text{MSE}(\hat{d}_{WBLP}) = O(n^{-4/5}),
	\]
	which corresponds to the same convergence rate as that of the $\hat{d}_{GPH}$ estimator in the stationary case. \\ 
	
	\noindent \cite{lee1} also establishes that, under the assumption \( m = O(n^{4/5}) \), the estimator \(\hat{d}_{WBLP}\) follows an asymptotic normal distribution:  
	\[
	\sqrt{m}(\hat{d}_{WBLP} - d) \xrightarrow{D} \mathcal{N}(0, \frac{\pi^2}{24}) \quad \text{as } n \to \infty.
	\]  
	
	\noindent These results, based on the works of \cite{hdb} and \cite{andrew}, demonstrate the relevance of this approach in the context of semi-parametric estimation for long-memory time series. \\
	
	\noindent After examining the method proposed by \cite{lee1}, which uses the wavelet transform to refine the estimation of long-memory parameters, we now turn to an alternative approach aimed at improving this estimation.
	\section{Log Regression via A New Periodogram} \label{sec_04}
	In this section, we propose a new periodogram along with an innovative method for estimating the long-memory parameter in the wavelet domain. This method aims to improve the estimation of this parameter in long-memory stochastic volatility (LMSV) models. We will denote our periodogram by NKK periodogram and our estimator by NKK estimator.
	
	\subsection{NKK Periodogram}
	
	\noindent Consider a time series \(\{Y_{1},...,Y_{n}\}\), where $n$ denotes the sample size. By applying the discrete wavelet transform to $\left\{Y_t\right\}$, we obtain the wavelet coefficients given by formula \eqref{coeff_ondelette}. The periodogram for the wavelet transform at scale $j$, introduced by \cite{lee1}, is defined as:  
	\begin{equation*}
		I_k^{(j)} = \frac{1}{2\pi n} \sum_{q=0}^{2^J-1} \left| w_{jq} \exp(i\lambda_k q) \right|^2, \quad k = 1, 2, \dots, m,
	\end{equation*}
	where \( \lambda_k = \frac{2\pi k}{n} \). The integers \( j \) and \( q \) represent the scale and translation parameters, respectively, where \( j = 0, 1, \dots, J \) and \( q = 0, 1, \dots, 2^j - 1 \). The finest (maximum) scale is fixed at \( J \), where \( n = 2^J \). \\
	
	\noindent This periodogram \( I_k^{(j)} \) at scale \( j \) can also be expressed in a linear regression framework as follows:  
	\begin{equation*}
		I_k^{(j)} = \frac{n}{8\pi} \|\tilde{\beta}^{(j)}(\lambda_{k})\|^2,
	\end{equation*}
	where \(\|\cdot\|\) denotes the Euclidean (\(\ell_2\)) norm of vectors, and \(\tilde{\beta}^{(j)}(\lambda_{k})\) represents the solution to the classical least squares problem:  
	\begin{equation*}
		\tilde{\beta}^{(j)}(\lambda_k) = \arg \min_{\beta \in \mathbb{R}^2} \sum_{q=0}^{2^{J}-1} \left| w_{jq} - h_q^\top(\lambda_k)\beta \right|^2,
	\end{equation*}
	with \(h_q(\lambda_k) = [\cos(\lambda_k q), \sin(\lambda_k q)]^\top\) as the harmonic regressor vector. \\
	
	\noindent To improve the robustness compared to the periodogram proposed by \cite{lee1}, we suggest replacing the least squares criterion with the least absolute deviations (LAD) criterion, as done by \cite{li} with the ordinary periodogram.  
	Thus, the new regression coefficient is expressed as:  
	\begin{equation*}
		\hat{\beta}_{LAD}^{(j)}(\lambda_{k}) = \arg \min_{\beta \in \mathbb{R}^2} \sum_{q=0}^{2^{J}-1} \left| w_{jq} - h_q^\top(\lambda_k)\beta \right|.
	\end{equation*}
	\noindent \begin{definition}
		Let $\left\{Y_{1},...,Y_{n}\right\}$ be a time series of size \( n \), and let \(\{w_{jq}\}\) be the set of coefficients obtained after applying the discrete wavelet transform to this time series, where \( j = 0, 1, \dots, J \) denotes the scales and \( q = 0, 1, \dots, 2^j - 1 \) the translation parameters. The NKK periodogram at scale \( j \) is defined as:
		\begin{equation*}
			N_k = N_k^{(j)} = \frac{n}{8\pi} \|\hat{\beta}_{LAD}^{(j)}(\lambda_{k})\|^2,
		\end{equation*}
		where \(\hat{\beta}_{LAD}^{(j)}(\lambda_{k})\) represents the solution to the  LAD minimization problem applied to the wavelet coefficients \(\{w_{jq}\}\) at scale \( j \).
	\end{definition}
	
	\noindent This periodogram is expected to inherit the robustness benefits associated with the LAD regression, as discussed by \cite{bloom}. \\
	
	\noindent With the NKK periodogram defined, we now proceed to introduce the NKK estimator derived from it.
	
	\subsection{NKK Estimator}
	In this subsection, we propose a method for estimating the long-memory parameter by combining the log periodogram regression and the NKK periodogram. This approach aims to leverage the robustness of the LAD criterion while exploiting the multi-resolution properties of wavelets. \\
	
	\noindent We adopt the same procedure described in \cite{lee1}, but replace the wavelet-based periodogram \(I_k^{(j)}\) with the NKK periodogram \(N_k^{(j)}\) at scale \(j\). By substituting this modification into Equation \eqref{wavelet_LP}, we obtain:
	\begin{equation*}
		\log N_k = \alpha + \gamma X_k + \log\left[\frac{s\left(\lambda_k\right)}{s\left(0\right)}\right]  + \varepsilon^{N}_k, \quad k = 1, 2, \dots, m, 
	\end{equation*}
	where:
	\[
	\alpha = \log C_J + \log s(0), \quad \gamma = d - 1, \quad X_k = -2\log(\lambda_k), \quad \varepsilon^{N}_k = \log\left[\frac{N_{k}}{f_{w}(\lambda_{k})}\right].
	\] 
	
	\noindent We establish the necessary assumption for defining our estimator.\\
	
	\begin{assump}\label{asssump4}
		\begin{itemize}
			\item [(a)] For low frequencies, the term \(\log\left[\frac{s\left(\lambda_k\right)}{s\left(0\right)}\right]\) is negligible.
			\item [(b)] The random variables \(\log \left[\frac{N_{k}}{f_{w}(\lambda_{k})}\right]\), \(k = 1, 2, \dots, m\), are asymptotically \(i.i.d.\)
		\end{itemize}
	\end{assump}
	\noindent We now give on estimator in the following theorem.\\
	
	\noindent \begin{theorem}
		Assuming that Assumptions \ref{assump2.1}, \ref{assump3.1}, and \ref{asssump4} hold, the NKK estimator \(\hat{d}_{NKK}\) is given by:
		\begin{equation*}
			\hat{d}_{NKK} = \frac{\sum_{k=1}^{m} \left(X_k - \overline{X}_m\right) \log N_k}{\sum_{k=1}^{m} \left(X_k - \overline{X}_m\right)^2} + 1,
		\end{equation*}
		where \(\overline{X}_m = \frac{1}{m}\sum_{k=1}^{m} X_k\).
	\end{theorem}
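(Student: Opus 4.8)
The plan is to observe that the displayed formula for $\hat d_{NKK}$ is simply the ordinary least squares slope of the wavelet log-periodogram regression, shifted by one because the slope parameter equals $d-\nu$ and the Haar wavelet underpinning the NKK periodogram has $\nu=1$. So the statement is really the derivation of the OLS normal equations for a two-parameter linear fit, and the proof should be short.

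First I would start from the regression representation
\[
\log N_k = \alpha + \gamma X_k + \log\!\left[\frac{s(\lambda_k)}{s(0)}\right] + \varepsilon^N_k,\qquad k=1,\dots,m,
\]
with $X_k = -2\log(\lambda_k)$ and $\gamma = d-1$. By Assumption \ref{asssump4}(a) the deterministic term $\log[s(\lambda_k)/s(0)]$ is negligible at the low frequencies $\lambda_k = 2\pi k/n$ used in the regression, so the model collapses to the simple linear form $\log N_k = \alpha + \gamma X_k + \varepsilon^N_k$. Assumption \ref{asssump4}(b), which makes the residuals $\varepsilon^N_k = \log[N_k/f_w(\lambda_k)]$ asymptotically $i.i.d.$, is what legitimizes fitting this model by unweighted least squares.

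Next I would minimize
\[
Q(\alpha,\gamma) \;=\; \sum_{k=1}^{m}\bigl(\log N_k - \alpha - \gamma X_k\bigr)^2
\]
over $(\alpha,\gamma)\in\mathbb{R}^2$. The first normal equation $\partial Q/\partial\alpha = 0$ gives $\hat\alpha = \overline{\log N} - \hat\gamma\,\overline X_m$; substituting this into $\partial Q/\partial\gamma = 0$ and using $\sum_{k}(X_k - \overline X_m)\,c = 0$ for any constant $c$ (which lets one drop the centering of $\log N_k$ in the numerator) yields the closed form
\[
\hat\gamma \;=\; \frac{\sum_{k=1}^{m}\bigl(X_k - \overline X_m\bigr)\log N_k}{\sum_{k=1}^{m}\bigl(X_k - \overline X_m\bigr)^2}.
\]
Finally, since $\gamma = d-\nu$ and $\nu=1$ for the Haar wavelet (Assumption \ref{assump2.1} with $\nu=1$), we have $d = \gamma+1$, and replacing $\gamma$ by its estimate gives $\hat d_{NKK} = \hat\gamma + 1$, exactly the claimed expression.

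There is no genuine obstacle; the only care needed is bookkeeping. One should note that the term omitted via Assumption \ref{asssump4}(a) plays no role in the \emph{definition} of the estimator — it enters only the separate asymptotic bias analysis, to be carried out in the spirit of \cite{lee1}'s treatment of $\hat d_{WBLP}$ through a Taylor expansion of $\log[s(\lambda_k)/s(0)]$ around $\lambda=0$ — and that the intercept $\alpha = \log C_J + \log s(0)$ harmlessly absorbs the scale-dependent constant $C_J = 2^{-J(1+2\nu)}$ without affecting the slope, so the shift by one indeed recovers $d$ itself.
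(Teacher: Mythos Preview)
Your proposal is correct and follows essentially the same route as the paper: set up the simple linear regression $\log N_k = \alpha + \gamma X_k + \varepsilon^N_k$ under Assumption~\ref{asssump4}, compute the OLS slope $\hat\gamma$, simplify the numerator using $\sum_k (X_k - \overline X_m)=0$, and recover $\hat d_{NKK}=\hat\gamma+1$ from $\gamma=d-1$. The only cosmetic difference is that the paper first re-derives the regression equation from the spectral representation \eqref{densite_avec_ond} before applying least squares, whereas you start directly from the regression form already stated prior to the theorem.
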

	
	\begin{proof}
		Under Assumptions \ref{assump2.1} and \ref{assump3.1}, the approximate Gaussian spectral representation of the discrete wavelet transform around the zero frequency is given by Equation (\ref{densite_avec_ond}):  
		\begin{equation*}
			f_w(\lambda) = C_J \vert\lambda\vert^{-2(d-\nu)} f_{u}(\lambda) h(\lambda) \quad \text{as} \quad \lambda \to 0, \quad \text{for} \quad d \in (0, 0.5), \label{densite_avec_ond new}
		\end{equation*}  
		where $C_J = 2^{-J(1+2\nu)}$ and \( h(\lambda) = b^2(\lambda) \). \\
		
		\noindent Using the Haar wavelet (\(\nu = 1\)) in the wavelet transform, and letting \(s(\lambda) = f_u(\lambda)h(\lambda)\), we apply the logarithm to Equation (\ref{densite_avec_ond}) to obtain:  
		\begin{equation}
			\log f_{w}\left(\lambda\right) = \log C_J + \log s\left(0\right) -2(d-1)\log  \left(\lambda\right) + \log\left[\frac{s\left(\lambda\right)}{s\left(0\right)}\right]. \label{33}
		\end{equation}  
		
		\noindent Adding \(\log N_{k}\), evaluated at the Fourier frequencies \(\lambda_{k}\) for \(k = 1, 2, \dots, m\), to both sides of Equation (\ref{33}), we obtain:  
		\begin{equation}
			\log  N_{k} = \log C_J + \log s\left(0\right) -2(d-1)\log  \left(\lambda_k\right)  + \log\left[\frac{s\left(\lambda_k\right)}{s\left(0\right)}\right] + \log\left[\frac{N_{k}}{f_{w}(\lambda_{k})}\right]. \label{44}
		\end{equation}  
		
		\noindent Under Assumption \ref{asssump4}, we can rewrite (\ref{44}) as a linear regression equation:  
		\begin{equation*}
			\log N_k = \alpha + \gamma X_k  + \varepsilon^{N}_k, \quad k = 1, 2, \dots, m, 
		\end{equation*}  
		where:  
		\[ 
		\alpha = \log C_J + \log s(0), \quad \gamma = d - 1, \quad Y_k = \log N_k, \quad X_k = -2\log(\lambda_k), \quad  \varepsilon^{N}_k = \log\left[\frac{N_{k}}{f_{w}(\lambda_{k})}\right].
		\]  
		
		\noindent Solving this regression using least squares yields the following estimator for \(\gamma\):  
		\begin{equation*}
			\hat{\gamma} = \frac{\sum_{k=1}^{m} \left(X_k - \overline{X}_m\right)(Y_k - \overline{Y}_m)}{\sum_{k=1}^{m} \left(X_k - \overline{X}_m\right)^2}, 
		\end{equation*}
		where \(\overline{X}_m = \frac{1}{m}\sum_{k=1}^{m} X_k\) and \(\overline{Y}_m = \frac{1}{m}\sum_{k=1}^{m} Y_k\).\\  
		
		\noindent Notice that in this expression, $\overline{Y}_m$ represents the mean of $\log N_k$. A key property of centered deviations is that the sum of deviations from the mean is always zero:  
		\begin{equation*}
			\sum_{k=1}^{m} \left(X_k - \overline{X}_m\right) = 0.
		\end{equation*}
		
		\noindent Using this property, we can expand the numerator of \(\hat{\gamma}\) as follows:  
		\begin{equation*}
			\sum_{k=1}^{m} \left(X_k - \overline{X}_m\right)(Y_k - \overline{Y}_m) = \sum_{k=1}^{m} Y_k (X_k - \overline{X}_m) - \overline{Y}_m \sum_{k=1}^{m} \left(X_k - \overline{X}_m\right).
		\end{equation*}
		
		\noindent Since \(\sum_{k=1}^{m} \left(X_k - \overline{X}_m\right) = 0\), the second term in the expansion becomes zero. Thus, the numerator simplifies to:  
		\begin{equation*}
			\sum_{k=1}^{m} \left(X_k - \overline{X}_m\right)(Y_k - \overline{Y}_m) = \sum_{k=1}^{m} Y_k (X_k - \overline{X}_m).
		\end{equation*}
		
		\noindent Substituting this result back into the expression for \(\hat{\gamma}\), we obtain:  
		\begin{equation*}
			\hat{\gamma} = \frac{\sum_{k=1}^{m} (X_k - \overline{X}_m) \log N_k}{\sum_{k=1}^{m} (X_k - \overline{X}_m)^2}.
		\end{equation*}
		
		\noindent Therefore, the final estimator for the memory parameter is given by:  
		\begin{equation*}
			\hat{d}_{NKK} = \hat{\gamma} + 1.
		\end{equation*}
	\end{proof}
	
	\noindent Having defined the NKK periodogram and its corresponding estimator, we now proceed to an empirical evaluation of its performance through numerical simulations, comparing it to existing methods.
	
	\section{Simulation Study}\label{sec_05}
	In this section, we compare the finite-sample performance of the GPH estimator, the WBLP estimator, and the NKK estimator. \\
	
	\noindent To facilitate the comparison, we simulate the process from model (\ref{LMSV}) using, as in \cite{lee1}, the same data generator: an ARFIMA(1, \(d\), 0) process defined as:
	\begin{equation*}
		(1 - \phi)(1 - L)^d Z_t = \epsilon_t, \label{eq:ARFIMA}
	\end{equation*}
	where \(\phi\) is the autoregressive parameter and \(\epsilon_t\) is a sequence of independent and identically distributed (\textit{i.i.d.}) random variables with variance \(\sigma_\epsilon^2\). The fractional process \(I(d)\), denoted as \(\{Z_t\}_{t=1}^n\), is generated using the series:
	\begin{equation*}
		Z_t = \sum_{k=0}^{t-1} \frac{(d)_k}{k!} u_{t-k}, \quad \text{with } (d)_k = d(d+1)\cdots(d+k-1),
	\end{equation*}
	where \(u_t \sim \mathcal{N}(0, 1)\) is Gaussian white noise. \\
	
	\noindent For Figures \ref{fig1} and \ref{fig2}, the sample size is fixed at \(n = 1024\). The value of \(\sigma_\epsilon^2\) is set to 0.37, following the work of \cite{DeoHurvich2001} and \cite{lee1}. We consider the combinations \((d, \phi) \in \{(0.2, 0.4), (0.3, 0.5)\}\).Other combinations exhibit similar qualitative results. Regarding the number of frequencies \(m\) used in the regression, we include values of \(m\) ranging from \([n^{0.3}]\) to \([n^{0.8}]\), where \([x]\) denotes the integer part of \(x\). We focus on the variation of the Mean Squared Error (MSE), which is one of the most used metrics for assessing estimator performance. For each value of \(m\), we perform 1000 iterations to obtain a robust estimate of the MSE. \\
	
	\noindent For both the WBLP and NKK estimators, we use the Haar wavelet for the discrete wavelet transform. The entire scale \(j\) is fixed at the maximum scale \(J\). With \(n = 1024 = 2^{10}\), we set \(J = 10\), which generates the series of wavelet coefficients \(\{w_{Jq}, q = 0, 1, \dots, 2^{10} - 1\}\). \\
	
	\noindent The results presented in Figures \ref{fig1} and \ref{fig2} clearly show a similar trend in the variation of the MSE with the number of frequencies \(m\). When \(m\) is small, all three estimators exhibit high mean squared errors, reflecting instability due to an insufficient number of frequencies used in the regression. However, as \(m\) increases, the MSE decreases rapidly, reaching a relatively stable value. This transition occurs around \(m \approx 40\) for all three methods. \\
	
	\noindent In particular, the results highlight performance differences among the three methods studied:
	\begin{itemize}
		\item The NKK estimator achieves the lowest MSE over most of the range of \(m\), indicating superior robustness and accuracy, except for very small values of \(m\), where the GPH estimator exhibits a lower MSE.
		\item The WBLP estimator closely approaches the performance of the NKK estimator and provides comparable results for large values of \(m\). This method serves as an interesting alternative, particularly in contexts where increased robustness to non-Gaussian noise is critical.
		\item The GPH estimator, while effective for small values of \(m\), exhibits a slightly higher MSE for medium and large values of \(m\), reflecting its increased sensitivity to low frequencies.
	\end{itemize}
	\begin{figure}[h!]
		\centering
		\includegraphics[width=0.9\textwidth]{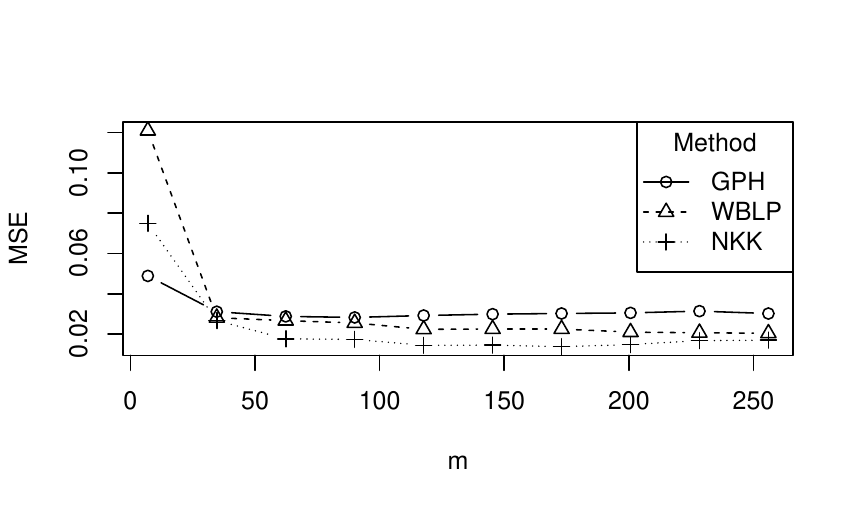}
		\caption{Mean squared error (MSE) as a function of the number of frequencies $m$ for $n = 1024$, $(d, \phi) = (0.2, 0.4)$.}
		\label{fig1}
	\end{figure}
	\begin{figure}[h!]
		\centering
		\includegraphics[width=0.8\textwidth]{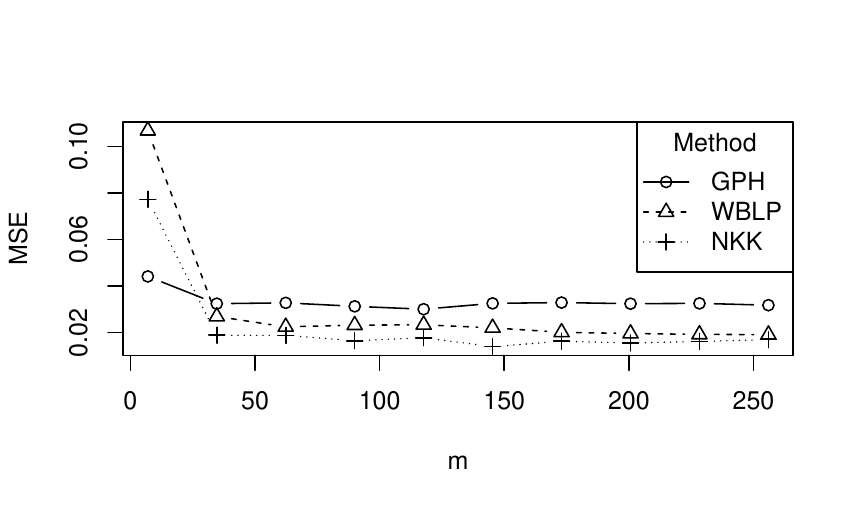}
		\caption{Mean squared error (MSE) as a function of the number of frequencies $m$ for $n = 1024$, $(d, \phi) = (0.3, 0.5)$.}
		\label{fig2}
	\end{figure}
	\begin{figure}[h!]
		\centering
		\includegraphics[width=0.8\textwidth]{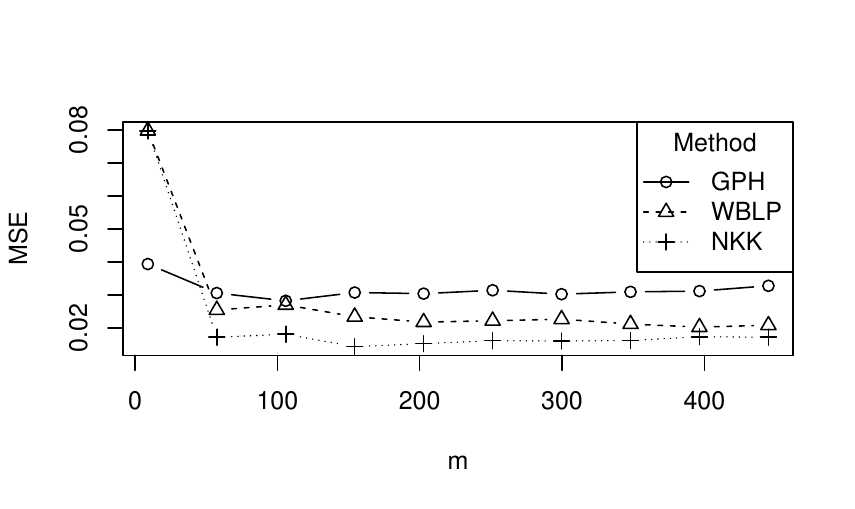}
		\caption{Mean squared error (MSE) as a function of the number of frequencies $m$ for $n = 2048$, $(d, \phi) = (0.3, 0.4)$.}
		\label{fig3}
	\end{figure}
	
	\noindent These results confirm the importance of appropriately choosing $m$ in the estimation and show that the NKK and WBLP estimators outperform the classical GPH estimator, particularly in the simulated scenario studied. The rapid convergence to a stable MSE for these two methods also makes them attractive for applications with moderate sample sizes.\\
	
	\noindent Next, in Figure \ref{fig3}, we increase the sample size to $n = 2048$, fix $(d, \phi) = (0.3, 0.4)$, and $\sigma_\epsilon^2 = 0.37$. With $n = 2048 = 2^{11}$, we set $J = 11$, generating the series of wavelet coefficients $\{w_{Jq}, q = 0, 1, \dots, 2^{11} - 1\}$. The obtained results exhibit similar trends to those observed previously. However, the decrease in the MSE as $m$ increases is slightly slower due to the larger sample size. Once again, the NKK estimator stands out for its superior performance, followed by the WBLP estimator, while the GPH estimator lags slightly behind.\\
	
	\noindent The results of the numerical simulations highlight the superiority of the NKK estimator across various scenarios, particularly for moderate sample sizes and in the presence of non-Gaussian noise.
	
	\section{Concluding remarks}\label{sec_06}
	In this paper, we explored new perspectives for estimating the long-memory parameter \(d\) by introducing the NKK periodogram, a robust extension of the wavelet-based log-periodogram. This method is based on least absolute deviations applied to wavelet coefficients, offering improved resistance to non-Gaussian noise spectra and enhancing estimation accuracy. \\
	
	\noindent The simulation results highlighted the superiority of the NKK estimator over existing approaches, particularly the GPH estimator of \cite{gph} and the WBLP estimator proposed by \cite{lee1}. The NKK estimator stands out for its robustness and precision over a wide range of frequencies, especially in contexts where noise presence can undermine the performance of classical methods. \\
	
	\noindent The integration of wavelets, thanks to their multi-resolution properties and their frequency and time localization, opens new perspectives for the analysis of complex time series, allowing for better modeling of long memory while mitigating the impact of perturbations and complex dynamics.\\

	\bibliographystyle{amsplain}

\end{document}